\newtheorem{assumption}{Assumption}
\newtheorem{remark}{Remark}
\newtheorem{lemma}{Lemma}
\newtheorem{theorem}{Theorem}[section]
\def\BibTeX{{\rm B\kern-.05em{\sc i\kern-.025em b}\kern-.08em
    T\kern-.1667em\lower.7ex\hbox{E}\kern-.125emX}}
\begin{document}

\title{Distributed Invariant Unscented Kalman Filter based on Inverse Covariance Intersection with Intermittent Measurements
}

\author{%
Zhian Ruan$^{*}$, Yizhi Zhou$^{*}$
 \thanks{Y. Zhou is with the Electrical and Computer Engineering, George Mason University, Fairfax, VA, USA. Z. Ruan is with the Mechanical Engineering, Northwestern University, Evanston, IL, 
 USA. (email {\tt\small zhianruan2025@u.northwestern.edu, yzhou26@gmu.edu}). 
Corresponding author: Yizhi Zhou%
\thanks{$^{*}$ The authors contributed equally as co-first authors.}
}
}
% Zhian Ruan zhianruan2025@u.northwestern.edu Northwestern University

\maketitle

\begin{abstract}
This paper studies the problem of distributed state estimation (DSE) over sensor networks on matrix Lie groups, which is crucial for applications where system states evolve on Lie groups rather than vector spaces. We propose a diffusion-based distributed invariant Unscented Kalman Filter using the inverse covariance intersection (DIUKF-ICI) method to address target tracking in 3D environments. Unlike existing distributed UKFs confined to vector spaces, our approach extends the distributed UKF framework to Lie groups, enabling local estimates to be fused with intermediate information from neighboring agents on Lie groups. To handle the unknown correlations across local estimates, we extend the ICI fusion strategy to matrix Lie groups for the first time and integrate it into the diffusion algorithm. We demonstrate that the estimation error of the proposed method is bounded. Additionally, the algorithm is fully distributed, robust against intermittent measurements, and adaptable to time-varying communication topologies. The effectiveness of the proposed method is validated through extensive Monte-Carlo simulations.
\end{abstract}

\section{Introduction}
State estimation over sensor networks has emerged as a pivotal research area with applications spanning diverse sectors such as Unmanned Aerial Vehicles (UAVs), robotic manipulators, and surveillance systems \cite{KUMAR2021109558}. Target tracking within sensor networks is a fundamental problem in state estimation, which can be addressed using either centralized or distributed methods. Compared to centralized approaches, which are vulnerable to single node failures and require extensive communication resources, distributed state estimation (DSE) has garnered increased attention for its scalability, robustness, and efficiency. In distributed algorithms, each sensor node provides a local estimate and fuse its neighbors' information to compute an improved estimate. However, because the sensors typically use common system models and share environmental data, the local estimates are often mutually correlated, presenting a challenge for accurate fusion \cite{NOACK2017}.

To address DSE problems while managing unknown correlations, algorithms based on Extended Kalman Filters (EKFs) and Unscented Kalman Filters (UKFs) are two commonly used approaches. Distributed Kalman Filters have been proposed using either the diffusion method with covariance intersection (CI) \cite{Hu2012} or the consensus method \cite{Ren2007}. Compared to the consensus method, which requires multiple iterations of communication and information exchange between agents to reach agreement in a single timestep, the diffusion method directly fuse the information from one-hop communication to update the local estimate, making it more practical in real-world applications.
To develop more practical DSE algorithms for general nonlinear systems and avoid the linearization required by EKF, Hao et al. \cite{CHEN2021} introduced a CI-based distributed diffusion UKF with intermittent measurements, which eliminates the need for linearization. To reduce the conservativeness of CI-based algorithms, inverse covariance intersection (ICI) \cite{NOACK2017} has been introduced to address DSE challenges \cite{Sun2023}. However, it is important to note that all these methods are based on \textit{vector spaces} with additive errors, limiting their application to 2D systems or simple linear models. In 3-D environments, a system state generally contains a 3-D rotation represented by a quaternion \cite{ZYICI} or a rotation matrix, which doesn’t belong to the vector space and makes the system highly nonlinear. While Euler angles can represent 3D rotation, they suffer from the well-known Gimbal lock problem.

To address this issue, recent works have introduced Lie groups for state representation, particularly $SE(3)$ for robot pose \cite{Jie2023}, and adapted DSE algorithms from vector spaces to the context of Lie groups \cite{ZH2024}. Split CI approach on Lie groups is presented in \cite{Liang2021} to fuse multiple correlated poses. Xu et al \cite{Jie2023} propose a fully distributed invariant Extended Kalman Filter (IEKF) which extends the CI algorithm to matrix Lie groups while ensuring the estimator's consistency. Lee et al \cite{LE2023} propose a maximum likelihood estimation approach on Lie groups for mobile network self-tracking.
Due to the conservativeness of the CI-based methods which directly ignore the cross-correlations, refs \cite{Zarei2024} derives an optimization-based approach to iteratively estimate the cross-correlation terms, integrating it in the EKF framework.  IEKF and its related fusion algorithms demonstrate good performance, characterized by a larger convergence domain. Nevertheless, an important drawback of the IEKF-based method is that it requires linearizing the system for yielding an invariant error dynamics, which might be difficult to carry out for complex nonlinear systems. 

This paper proposes a novel distributed invariant UKF based on ICI with intermittent measurements over sensor networks, which is fully distributed. The main contributions of our work are as follows: 1) To the best of our knowledge, most existing distributed UKFs operate only in vector spaces. In contrast, this paper extends UKF-based DSE methods to matrix Lie groups, allowing for state estimation and fusion in systems that evolve on Lie groups. 2) To properly handle the unknown cross-correlations between local estimates on Lie groups, we extend the ICI algorithm from vector spaces to Lie groups for the first time and integrate it into the diffusion-based framework. We also prove that the fused estimates for each agent are bounded in this work, a result not previously provided in the literature for solving DSE problems on Lie groups. 3) The performance of the proposed method is validated through extensive Monte-Carlo simulations and compared to the CI-based fusion method, demonstrating superior accuracy.

\section{Preliminaries and Problem Statement}

\subsection{Matrix Lie Group}

A matrix Lie group $\mathcal{G}$ \cite{BAB2017} is a subset of square invertible matrices with the following properties hold
\begin{align}
    \forall a \in \mathcal{G}, a^{-1} &\in \mathcal{G} \\
    \forall a, b \in \mathcal{G}, ab &\in \mathcal{G}
\end{align}
and its associated Lie Algebra, denoted as $\mathfrak g$, represents the tangent space of $\mathcal{G}$. An element $\bm\xi\in \mathbb R^{\dim\mathfrak g}$ can be mapped to its corresponding Lie group using the exponential map, $\exp: \mathbb R^{dim \mathfrak g}$, given as
\begin{align}
\exp(\bm\xi)=\exp_m(\bm\xi^\wedge)
\end{align}
where $\exp_m(\cdot)$ represents the matrix exponential, and $(\cdot)^\wedge:\mathbb{R}^{\dim \mathfrak{g}} \rightarrow \mathfrak{g}$ is the linear map that transfers the element in the Lie algebra to the corresponding matrix representation. The logarithm map, which is the inverse function of the exponential map, is denoted by $\log(\cdot)$, leading to
\begin{align}
\log(\exp(\bm\xi))=\bm\xi
\end{align}
Let $\mathbf X^t\in\mathcal G$ be a state of a dynamic system at time $t$, and $\hat{\mathbf X}^t$ denotes the state estimate.The right and left invariant error between $\mathbf X^t$ and $\hat {\mathbf X}^t$, denoted as $\bm\eta_r^t$ and $\bm\eta_l^t$, respectively, are defined by
\begin{align}\label{eq_right_inv_er}
\bm\eta_r^t=\hat{\mathbf X}^t({\mathbf X}^t)^{-1}, \, \bm\eta_l^t=({\mathbf X}^t)^{-1} \hat{\mathbf X}^t
\end{align}
\begin{remark}
The derivation in this paper is based on the right-invariant error, and for detailed properties of invariant errors, we refer the reader to \cite{BAB2017}.
\end{remark}

\subsection{Unscented Kalman Filter on Lie Group}
Unlike the standard UKF or its square-root implementation, which operates in vector space, the invariant UKF \cite{BMB2017} applies the unscented transform on Lie groups and uses Lie exponential coordinates to derive uncertainty ellipsoids, as outlined in Algorithm \ref{alg2}. This forms the foundation for deriving the proposed DIUKF-ICI in this paper.

\subsection{System Model and Problem Statement}
Consider a sensor network of $N$ sensor nodes where each sensor possesses the capability of sensing and communication. The state space model associated with the environment and the measurement of each agent can be described by
\begin{align}\label{eq_sys}
\mathbf X^{t+1}&=f(\mathbf X^t, \mathbf u^t, \mathbf n^t)\nonumber\\
\mathbf z_i^t&=h_i(\mathbf X^t, \mathbf w_i^t)
\end{align}
where $\mathbf X^t\in\mathcal G$ is the state of interest at time $t$, and $\mathbf z_i^t$ is the measurement of the $i$'s sensor node. $\mathbf n^t$ and $\mathbf w_i^t$ are process noise and measurement noise, respectively, which are assumed to be white Gaussian noises and mutually uncorrelated. The covariance matrices of $\mathbf n^t$ and $\mathbf w_i^t$ are given by $\mathbf O^t$ and $\mathbf Q_i^t$. $f(\cdot):\mathcal G \rightarrow\mathcal G$ denotes the nonlinear state transition function and $h_i(\cdot)$ describes the nonlinear measurement model of the $i$'s sensor.
The topology of the sensor network of all agents is modeled by a time-varying undirected graph $\mathbb{G}^t = (\mathcal{V}, \mathcal{E}^t)$, where $\mathcal{V}$ represents the set of all agents, and $\mathcal{E}^t$ stands for the set of communication links at time $t$ defined as $\mathcal{E}^t\in\mathcal{V}\times\mathcal{V}$. In particular, node $j$ is the neighbor of agent $i$ and can communicate with the $i$'s node when $(j,i)\in\mathcal{E}^t$. We define the set of node $i$'s communication neighbors as 
$\mathcal{N}_i^t = \{j \mid (j, i) \in \mathcal{E}^t, \, j \in \mathcal{V}\}$. We assume that self communication always exists, i.e., $(i, i) \in \mathcal{E}^t, \forall i \in \mathcal{V}$.

Given the system model specified in \eqref{eq_sys}, the major problem is for every agent $i$ in the sensor network to compute a stable estimate of the unknown state $\mathbf X^t$ on the Lie group with a time-varying communication topology, while only sharing information with its neighbors. Unlike the DSE problem on the vector space which already been well established, no conclusions have been made on how to guarantee the stability estimate on Lie group for a time-varying system. 
% Furthermore, although the multi-fusion problem on the Lie group has been studied in some recent works \cite{Jie2023, Zarei2024}, these studies mostly built upon the EKF framework and still requires the jacobian computation, which may lead to additional errors.

% For each agent, its state can be written as
% \begin{align}
%     \mathbf{x}=({^{G}\mathbf{R}}, {^G{\mathbf v}}, {^G{\mathbf p}})
% \end{align}
% where rotation matrix of sensor $i$ from target frame to global frame is $^{G}_{T}\mathbf{R}_i\in \mathbf{SO(3)}$, its velocity, own position, and the feature point position from the target are ${^G{\mathbf v}_i, ^G{\mathbf p}_i, ^G{\mathbf p}_{f_i} \in\mathbb{R}^3}$. Then, we can extend this state equation to $SE(3)$. 

% \begin{align}
%     \mathbf{X} = \begin{bmatrix}
%     {^{G}\mathbf{R}} & {^G{\mathbf v}} & {^G{\mathbf p}}  \\
%     \mathbf{0}_{1 \times 2} & 1 & 0  \\
%     \mathbf{0}_{1 \times 2} & 0 & 1 \\
%     \end{bmatrix} \in \mathbb{R}^{5 \times 5}
% \end{align}
% In state estimation, the state and its covariance are represented as $(\mathbf{X}, \mathbf{P})$, with estimates denoted $(\mathbf{\hat{X}}, \mathbf{\hat{P}})$. System noise $\mathbf{w}$ is modeled as $\mathbf{w} \sim \mathcal{N}(0, \mathbf{Q})$, and observation $\mathbf{Z}_t$ includes noise $\mathbf{n} \sim \mathcal{N}(0, \mathbf{R})$.
% IS NOISE GUASSIAN?

\section{Distributed Invariant Unscented Kalman Filter Based on ICI}
In this section, we propose a diffusion-based distributed Unscented Kalman Filter on the Lie group using the ICI algorithm (DIUKF-ICI), as outlined in Algorithm \ref{alg1}. 
% Unlike existing distributed UKFs that operate solely in vector space \cite{Liang2021}, making it capable of solving DSE problems where the state is represented on a Lie group, thereby expanding the applicability of distribute UKFs beyond traditional vector space.

\begin{algorithm}[h]
\caption{DIUKF-ICI}
\label{alg1}
\begin{algorithmic}[1]
\State \textbf{Step 1: Initialization}
\State \quad Consider the nonlinear state-space model (1). Start with $\mathbf{\hat X}_i^0 = \mathbb{E}(\mathbf X_i^0)$, $\mathbf{\hat P}_i^0 = \mathbf{P}_i^0$, for all nodes $i=1,...,N$.
\State \textbf{Step 2: Individual estimation by the local filter}
\State \quad For each agent $i=1,...,N$, performs the individual estimation to compute the individual estimate $(\mathbf{\bar X}_i^t, \mathbf{\bar P}_i^t)$ by its local UKF (\textbf{cf.} Algorithm \ref{alg2}).
\State \textbf{Step 3: Incremental update}
\State \quad Each agent updates its individual estimate with the neighbor's measurements to compute the local estimate $(\mathbf{\check X}_i^t, \mathbf{\check P}_i^t)$, by \eqref{eq_inp1} and \eqref{eq_inp2}.
\State \textbf{Step 4: Diffusion update with ICI}
\State \quad Each agent fuses the local estimate with its neighbor to compute an improved fused estimate $(\mathbf{\hat X}_i^t, \mathbf{\hat P}_i^t)$
\end{algorithmic}
\end{algorithm}

\begin{algorithm}[h]
\caption{UKF on Lie group for each agent}
\label{alg2}
\begin{algorithmic}[1]
\State \textbf{Input:} $\hat{\mathbf X}_i^{t-1}, \hat{\mathbf P}_i^{t-1}, \mathbf u_i^{t-1}, \mathbf O^{t-1}, \mathbf Q_i^{t-1}$, $\mathbf{z}_i^t$
\State \textbf{Step 0: UKF parameter initialization}
\State \quad Initialize the UKF parameters: $\gamma_i, \gamma_i^\prime$, $W_{i,k}$, $W_{i,k}^\prime$,$L$, and $L^\prime$
\State \textbf{Step 1: Propagation}
\State \quad ${\mathcal P}_i^{t-1}=\text{blkdiag}(\hat{\mathbf P}_i^{t-1}, \mathbf O^{t-1})$ \hfill \textcolor{blue}{// augment covariance }
\State \quad $\bar{\mathbf X}_i^{t|t-1}=f(\hat{\mathbf X}_i^{t-1},\mathbf u_i^{t-1})$\hfill \textcolor{blue}{// propagate the noiseless mean}
\State \quad \textcolor{blue}{// generate sigma point}
\State \quad $\bm{\alpha}_{i,k}=\text{col}(\sqrt{\gamma_i{\mathcal P}_i^{t-1}})_k,\, k=1,2,...,L$ 
\State \quad $\bm{\alpha}_{i,k}=-\text{col}(\sqrt{\gamma_i{\mathcal P}_i^{t-1}})_k,\, k=L+1,L+2,...,2L$ 
\State \quad $\begin{bmatrix}{\bm\zeta_{i,k}}& \mathbf n_k\end{bmatrix}=\bm{\alpha}_{i,k},\, k=1,2,...,2L$
\State \quad \textcolor{blue}{// sigma point propagation}
\State \quad ${\mathcal X}_{i,k}^t=f(\exp({\bm\zeta_{i,k}})\hat{\mathbf X}_i^{t-1},\mathbf u_i^{t-1}),\, k=1,2,...,2L$
\State \quad ${\bm\zeta_{i,k}}\leftarrow \log({\mathcal X}_{i,k}^t(\bar{\mathbf X}_i^{t|t-1})^{-1})$
\State \quad \textcolor{blue}{// compute the propagated covariance}
\State \quad $\bar{\mathbf P}_i^{t|t-1}=\sum_{k=1}^{2L}W_{i,k} \log(\bm\zeta_{i,k}) \log(\bm\zeta_{i,k})^\top$
\State \textbf{Step 2: Update} 
\State \quad ${{\mathcal P}_i^{t}}^\prime=\text{blkdiag}(\bar{\mathbf P}_i^{t|t-1}, \mathbf Q_i^{t})$ \hfill \textcolor{blue}{// augment covariance }
\State \quad $\bar{\mathbf z}_{i,0}^t=h(\bar{\mathbf X}_i^{t|t-1},\bm 0)$\hfill \textcolor{blue}{// propagate the noiseless mean by observation model}
\State \quad \textcolor{blue}{// generate sigma point}
\State \quad $\bm{\alpha}_{i,k}^\prime=\text{col}(\sqrt{{\gamma_i^\prime}{\mathcal P}_i^{t}})_k,\, k=1,2,...,L^\prime$
\State \quad $\bm{\alpha}_{i,k}^\prime=-\text{col}(\sqrt{\gamma_i^\prime{\mathcal P}_i^{t}})_k,\, k=L^\prime+1,L^\prime+2,...,2L^\prime$ 
\State \quad $\begin{bmatrix}{\bm\zeta_{i,k}^\prime}& {\mathbf w_{i,k}^\prime}\end{bmatrix}=\bm{\alpha}_{i,k}^\prime,\, k=1,2,...,2L^\prime$
\State \quad \textcolor{blue}{// sigma point propagation by observation model}
\State \quad $\bar{\mathbf z}_{i,k}^t=h(\exp({\bm\zeta_{i,k}^\prime})\bar{\mathbf X}_i^{t|t-1},\bm 0),\, k=1,2,...,2L^\prime$
\State \quad \textcolor{blue}{// compute measurement covariance}
\State \quad $\bar{\mathbf z}_{i}^t=\sum_{k=0}^{2L^\prime}W_{i,k}^\prime\bar{\mathbf z}_{i,k}^t$
\State \quad ${\mathbf P_{zz,i}^t}=\sum_{k=0}^{2L^\prime}W_{i,k}^\prime({\mathbf z}_{i,k}^t-\bar{\mathbf z}_{i}^t)({\mathbf z}_{i,k}^t-\bar{\mathbf z}_{i}^t)^\top$
\State \quad ${\mathbf P_{xz, i}^t}=\sum_{k=0}^{2L^\prime}W_{i,k}^\prime(\bm{\alpha}_{i,0}^\prime-\bm{\alpha}_{i,k}^\prime)({\mathbf z}_{i,k}^t-\bar{\mathbf z}_{i}^t)^\top$
\State \quad \textcolor{blue}{// update the state and covariance}
\State \quad $\bar{\bm\xi}_i^t={\mathbf P_{xz,i}^t}{\mathbf P_{zz,i}^t}^{-1}(\mathbf z_i^t-\bar{\mathbf z}_i^t)$
\State \quad $\bar{\mathbf X}_i^t=\exp(\bar{\bm\xi}_i^t) \bar{\mathbf X}_i^{t|t-1}$
\State \quad ${\bar{\mathbf P}_{i}^t}={\bar{\mathbf P}_{i}^{t|t-1}}-{\mathbf P_{xz,i}^t}({\mathbf P_{xz,i}^t}{{\mathbf P_{zz,i}^t}}^{-1})^\top$\label{eq_cov_up}
\State \textbf{Output:} $\bar{\mathbf X}_i^{t}$, $\bar{\mathbf P}_i^{t}$
\end{algorithmic}
\end{algorithm}

As with standard diffusion-based distributed Kalman Filters \cite{Liang2021, Hu2012}, the proposed DIUKF-ICI framework consists of three main steps: propagation, incremental updates, and diffusion updates. The propagation follows the standard invariant UKF approach (\textbf{cf.} step 1, Algorithm \ref{alg2}), as outlined in \cite{BMB2017}, where each agent performs prediction using its own information of the system dynamics. Since this process is not the focus of this paper, we will only detail the incremental and diffusion updates in the following sections.

\subsection{Incremental update}
In order to derive the proper form for the incremental update, we first introduce the pseudo measurement matrix $\mathbf H_i^t$ \cite{CHEN2021} defined by
\begin{align}
\mathbf H_i^t&=({\mathbf P}_{xz,i}^t)^\top(\bar {\mathbf P}_i^{t|t-1})^{-1}
\end{align}
By using the matrix inverse lemma, the covariance update (\textbf{cf.} Algorithm \ref{alg2}, line. \eqref{eq_cov_up}) can be rewritten as
\begin{align}
(\bar{\mathbf P}_i^t)^{-1}&=(\bar{\mathbf P}_i^{t|t-1})^{-1}+(\mathbf H_i^t)^\top (\mathbf R_i^t)^{-1}\mathbf H_i^t
\end{align}
where 
\begin{align}
\mathbf R_i^t&=\mathbf P_{zz,i}^t-\mathbf H_i^t \mathbf P_{xz,i}^t
\end{align}
Given the individual estimate $(\bar {\mathbf X}_i^{t}, \bar {\mathbf P}_i^{t})$ obtained from the local UKF of each agent after the propagation step (\textbf{cf.} step 1, Algorithm \ref{alg2}), each agent takes its own measurements and receives the quantities $(\mathbf H_{j}^t)^\top (\mathbf R_j^t)^{-1}\mathbf H_{j}^t$, $(\mathbf H_{j}^t)^\top (\mathbf R_j^t)^{-1}(\mathbf z_j^t-\bar{\mathbf z}_j^t)$ from its neighbors. Then it performs the incremental update to computes a local estimate, denoted as $(\check {\mathbf X}_i^t, \check {\mathbf P}_i^t)$, as follows
\begin{align}\label{eq_inp1}
(\check{\mathbf P}_i^t)^{-1}&=(\bar{\mathbf P}_i^{t})^{-1}+\sum_{j\in\mathcal{N}_i^t, j\neq i}(\mathbf H_{j}^t)^\top (\mathbf R_j^t)^{-1}\mathbf H_{j}^t\nonumber\\
\check{\mathbf X}_i^t&=\exp(\bar{\bm \xi}_i^t)
\end{align}
where terms $\mathbf H_i^t$, $\mathbf P_{zz,t}^t$, $\mathbf P_{xz,t}^t$ are computed following Algorithm \ref{alg2}, and
\begin{align}\label{eq_inp2}
\bar{\bm \xi}_i^t&= \bar{\mathbf P}_i^{t}\sum_{j\in\mathcal{N}_i^t,j\neq i}(\mathbf H_{j}^t)^\top (\mathbf R_j^t)(\mathbf z_j^t-h_j(\mathbf X_j^t))
\end{align}

\subsection{Diffusion update}
The objective of the diffusion update is to compute an improved estimate for each agent by fusing the local estimates of its neighbors once the incremental update is completed. Since each agent shares a common system model and uses the neighbors' information to update its individual estimate in the incremental update step, the local estimates among agents are mutually correlated, with the correlations being unknown. Various fusion rules have been proposed to address the challenge of unknown correlations for a consistent estimate, with covariance intersection (CI)\cite{CI1997} and inverse covariance intersection (ICI) \cite{NOACK2017} being the most commonly used methods in the literature. Compared to the overly conservative CI method, which results in larger covariance estimates, ICI reduces conservatism and yields tighter covariance bounds, providing more accurate uncertainty estimates. However, ICI is currently limited to vector spaces and cannot be applied to Lie groups, as no such work has been developed yet. To ensure the fusion performance in diffusion updates for local estimates represented on Lie groups, this paper proposes a novel fusion algorithm that extends ICI to Lie groups.

Before introducing the proposed fusion algorithm, we first present several useful definitions and lemmas that will be employed in the algorithm's derivation.
Given the local estimate $(\check {\mathbf X}_i^t, \check {\mathbf P}_i^t)$ from agent $i$ at timestep $t$, let ${\bm{\xi}}_i^t$ denote the local estimate error between the local estimate $\check {\mathbf X}_i^t$ and the true state ${\mathbf X}^t$, which can be represented as
% \begin{align}\label{eq_err}
% \text{exp}\left({\bm\xi}_i\right)&=\check {\mathbf X}_i^t({\mathbf X}^t)^{-1}=\check {\mathbf X}_i^t(\check {\mathbf X}_j^t)^{-1}(\check {\mathbf X}_j^t)({\mathbf X}^t)^{-1}\nonumber\\
% &=\check {\mathbf X}_i^t(\check {\mathbf X}_j^t)^{-1}\exp({\bm\xi}_j^t)
% \end{align}
\begin{align}\label{eq_err}
\text{exp}\left({\bm\xi}_i\right)&={\mathbf X}^t(\check {\mathbf X}_i^t)^{-1}={\mathbf X}^t(\check {\mathbf X}_j^t)^{-1}(\check {\mathbf X}_j^t)(\check{\mathbf X}_i^t)^{-1}\nonumber\\
&=\exp({\bm\xi}_j^t)\check {\mathbf X}_j^t(\check {\mathbf X}_i^t)^{-1}
\end{align}
where ${\bm\xi}_j^t$ is the error of $j$'s estimate defined in $se_2(3)$, and $\check {\mathbf X}_j^t(\check {\mathbf X}_i^t)^{-1}$ denotes the error between estimate $i$ and $j$.
\begin{assumption}\label{ass_1}
The error ${\bm\xi}_i$ is small, i.e., there is no substantial difference between the local estimate $\check {\mathbf X}_i^t$ and the true state ${\mathbf X}_i^t$.
\end{assumption}
\begin{lemma}\label{lem_1} Given assumption \ref{ass_1}, each $\log\left(\check {\mathbf X}_j^t(\check {\mathbf X}_i^t)^{-1}\right)$, $\text{for} \,j\in\mathcal{N}_i^t$, can be treated as an estimate of ${\bm\xi}_i^t$, and ${\bm\xi}_j^t$ is the corresponding estimate error with error covariance ${\check{\mathbf P}_j^t}$. 
\end{lemma}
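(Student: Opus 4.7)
\textbf{Proof plan for Lemma \ref{lem_1}.} The plan is to start from the identity \eqref{eq_err}, take logarithms on both sides, and invoke the Baker--Campbell--Hausdorff (BCH) expansion under Assumption \ref{ass_1} to linearize the product of two exponentials. First I would introduce the shorthand $\mathbf Y_{ji}^t:=\check{\mathbf X}_j^t(\check{\mathbf X}_i^t)^{-1}$ and $\bm\mu_{ji}^t:=\log(\mathbf Y_{ji}^t)$, so that \eqref{eq_err} becomes
\begin{equation}
\exp(\bm\xi_i^t)=\exp(\bm\xi_j^t)\,\exp(\bm\mu_{ji}^t).
\end{equation}
The target statement now amounts to showing that $\bm\mu_{ji}^t$ approximates $\bm\xi_i^t$ up to the residual $\bm\xi_j^t$, whose covariance is $\check{\mathbf P}_j^t$ by definition of the local estimate.

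Next I would apply the BCH formula
\begin{equation}
\log\bigl(\exp(\mathbf A)\exp(\mathbf B)\bigr)=\mathbf A+\mathbf B+\tfrac{1}{2}[\mathbf A,\mathbf B]+O\bigl(\|\mathbf A\|\|\mathbf B\|(\|\mathbf A\|+\|\mathbf B\|)\bigr)
\end{equation}
with $\mathbf A=\bm\xi_j^t$ and $\mathbf B=\bm\mu_{ji}^t$. Under Assumption \ref{ass_1}, $\bm\xi_j^t$ is small; moreover, since both $\check{\mathbf X}_i^t$ and $\check{\mathbf X}_j^t$ are close to the common true state $\mathbf X^t$, the relative element $\mathbf Y_{ji}^t$ is close to the identity of $\mathcal{G}$, so $\bm\mu_{ji}^t$ is also small. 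Consequently the commutator and higher-order BCH terms are second order in the small quantities and can be discarded, yielding the linearized relation
\begin{equation}
\bm\xi_i^t\;\approx\;\bm\xi_j^t+\bm\mu_{ji}^t,
\end{equation}
equivalently $\bm\mu_{ji}^t\approx \bm\xi_i^t-\bm\xi_j^t$.

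From this I would conclude the lemma by interpreting $\bm\mu_{ji}^t=\log\bigl(\check{\mathbf X}_j^t(\check{\mathbf X}_i^t)^{-1}\bigr)$ as an estimator of $\bm\xi_i^t$: its estimation error is $\bm\xi_i^t-\bm\mu_{ji}^t\approx \bm\xi_j^t$, and by the definition of the local estimate $(\check{\mathbf X}_j^t,\check{\mathbf P}_j^t)$ we already have $\mathbb{E}[\bm\xi_j^t(\bm\xi_j^t)^\top]=\check{\mathbf P}_j^t$. Therefore $\log(\check{\mathbf X}_j^t(\check{\mathbf X}_i^t)^{-1})$ is an estimate of $\bm\xi_i^t$ whose error is $\bm\xi_j^t$ with covariance $\check{\mathbf P}_j^t$, which is precisely the claim.

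The main obstacle I anticipate is the rigorous justification of dropping the BCH remainder: one must argue that both factors in the BCH product are genuinely small (not only $\bm\xi_j^t$, which Assumption \ref{ass_1} covers, but also $\bm\mu_{ji}^t$, which requires invoking closeness of both local estimates to the true state). This is the only nontrivial step; once it is accepted, the rest is a direct identification of the residual with $\bm\xi_j^t$ and its known covariance.
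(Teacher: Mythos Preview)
Your proposal is correct and follows essentially the same BCH-linearization argument as the paper. The only minor difference is that the paper applies BCH directly to the factorization $\check{\mathbf X}_j^t(\check{\mathbf X}_i^t)^{-1}=\exp(\bm\xi_i^t)\exp(-\bm\xi_j^t)$, so both exponents are immediately small by Assumption~\ref{ass_1} and the obstacle you anticipate (separately justifying smallness of $\bm\mu_{ji}^t$) never arises.
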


\begin{proof}
Following equation \eqref{eq_err}, $\check {\mathbf X}_j^t(\check {\mathbf X}_i^t)^{-1}$ can be represented using the Baker-Campbell-Hausdorff formula \cite{SSR2017} given as
\begin{align}
&\check {\mathbf X}_j^t(\check {\mathbf X}_i^t)^{-1}=\exp({\bm\xi}_i^t)\exp(-{\bm\xi}_j^t)\nonumber\\
&=\exp\left({\bm\xi}_i^t-{\bm\xi}_j^t+\frac{1}{2}[{\bm\xi}_i^t,{-\bm\xi}_j^t]+\frac{1}{12}[{\bm\xi}_i^t,{-\bm\xi}_j^t]+...\right)
\end{align}
Given assumption \ref{ass_1}, we can approximate $\check {\mathbf X}_i^t(\check {\mathbf X}_j^t)^{-1}$ by ignoring the high-order term, yielding
\begin{align}
\log\left(\check {\mathbf X}_j^t(\check {\mathbf X}_i^t)^{-1}\right)&\approx{\bm\xi}_i^t-{\bm\xi}_j^t
\end{align}
which can be equivalently written as:
\begin{align}
{\bm\xi}_j^t&\approx{\bm\xi}_i^t-\log\left(\check {\mathbf X}_j^t(\check {\mathbf X}_i^t)^{-1}\right)
\end{align}
which indicates that each $\log\left(\check {\mathbf X}_j^t(\check {\mathbf X}_i^t)^{-1}\right)$ can be treated as an estimate of ${\bm\xi}_i^t$ with error ${\bm\xi}_j^t$ and covariance $\check{\mathbf P}_j^t$. This completes the proof of Lemma \ref{lem_1}.
\end{proof}

Note that the local estimate error ${\bm\xi}_i^t$ can never be precisely known since the true state $\mathbf X_i^t$ is unknown. Nevertheless, 
we can still utilize the term $\log\left(\check {\mathbf X}_j^t(\check {\mathbf X}_i^t)^{-1}\right)$ to estimate the error ${\bm\xi}_i^t$ as stated in Lemma \ref{lem_1}. 
Based on this insight, we aim to fuse all the estimate pairs $\left(\log\left(\check {\mathbf X}_j^t(\check {\mathbf X}_i^t)^{-1}\right),\,\check {\mathbf P}_j^t\right)$ to obtain an improved estimate for the error ${\bm\xi}_i^t$, denoted as $\left(\hat{\bm\xi}_i^t, \hat {\mathbf P}_i^t\right)$. Specifically, we employ the multi-fusion ICI algorithm \cite{AJ2020} to fuse all the pairs $\left(\log\left(\check {\mathbf X}_j^t(\check {\mathbf X}_i^t)^{-1}\right),\,\check {\mathbf P}_j^t\right)$. The fused covariance $\hat {\mathbf P}_i^t$ can be computed as
\begin{align}\label{eq_fus_cov}
\hat{\mathbf P}_i^t&=\begin{bmatrix}\sum_{j\in\mathcal{N}_i^t}(\check{\mathbf P}_j^t)^{-1}-(\left|\mathcal{N}_i^t\right|-1)(\mathbf P_{\Gamma_i}^t)^{-1}\end{bmatrix}^{-1}   
\end{align}
where 
\begin{align}
\mathbf P_{\Gamma_i}^t&=\sum_{j\in\mathcal{N}_i^t}\omega_{ij}^t \check{\mathbf P}_j^t
, \, \text{with}\, \omega_{ij}^t\in[0,1],\,\sum_{j\in\mathcal{N}_i^t}\omega_{ij}=1
\end{align}
The weight $\omega_{ij}$ can be computed using the algorithm in \cite{AJ2020}, and the fused error $\hat{\bm\xi}_i^t$ is given by
\begin{align}
\hat{\bm\xi}_i^t&=\hat{\mathbf P}_i^t \bm\Gamma_i^t
\log\left(\check {\mathbf X}_j^t(\check {\mathbf X}_i^t)^{-1}\right)\\
\bm\Gamma_i^t&=
\sum_{j\in\mathcal{N}_i^t}\left((\check{\mathbf P}_j^t)^{-1}-(\left|\mathcal{N}_i^t\right|-1)\omega_{ij}^t(\mathbf P_{\Gamma_i}^t)^{-1}\right)    
\end{align}
Compared with the statement in Lemma \ref{lem_1} that approximates the local estimate error ${\bm\xi}_i^t$ by the term $\log\left(\check {\mathbf X}_j^t(\check {\mathbf X}_i^t)^{-1}\right)$, the fused error term $\hat{\bm\xi}_i^t$ can be treated as an improved estimate of the error ${\bm\xi}_i^t$. As a result, the fused $\hat{\bm\xi}_i^t$ can then be used to correct the local estimate $\check{\mathbf X}_i^t$ of each agent as
\begin{align}\label{eq_crr}
\hat{\mathbf X}_i^t&=\exp(\hat{\bm\xi}_i^t)\check{\mathbf X}_i^t    
\end{align}
resulting in a fused state estimate. To derive the fused covariance, we first define $\bm\varepsilon_i^t$ as the estimate error of the fused state $\hat{\mathbf X}_i^t$ given by
\begin{align}
\exp(\bm\varepsilon_i^t)&={\mathbf X}^t (\hat{\mathbf X}_i^t)^{-1}=\exp({\bm\xi}_i^t)\check{\mathbf X}_i^t (\hat{\mathbf X}_i^t)^{-1}\nonumber\\
&=\exp({\bm\xi}_i^t)\exp(-\hat{\bm\xi}_i^t)
\end{align}
By using the BCH formula \cite{SSR2017} and ignoring the high-order terms, the above equation yields
\begin{align}
\bm\varepsilon_i^t\approx{\bm\xi}_i^t-\hat{\bm\xi}_i^t
\end{align}
where $\hat{\mathbf P}_i^t$ denotes the covariance. Consequently, $\hat{\mathbf P}_i^t$ can be directly treated as estimate covariance for $\hat{\mathbf X}_i^t$.

\section{Performance Analysis}
% \begin{proposition}
% The estimation error for the nonlinear system \eqref{eq_sys} of DIUKF-ICI follows the recursion
% \begin{align}
% \hat{\mathbf X}_i^t&=\exp(\hat{\bm \xi}_i^t)\exp(\check{\bm \xi}_i^t)
% \end{align}
% \end{proposition}

% \begin{proof}
% \begin{align}
% \hat{\mathbf X}_i^t&=\exp(\hat{\bm \xi}_i^t)\exp(\check{\bm \xi}_i^t)
% \end{align}   
% \end{proof}

\begin{lemma}\label{lem_2} \cite{CHEN2021} Assume that $A^t,B^t\in \mathbb R^{n\times n}$, for $t=0,1,\cdots$, are positive define invertible matrices sequence, and $A^t,B^t$ are bounded by $\underline a \mathbf I\leq A^t\leq \bar a \mathbf I$, $\underline b \mathbf I\leq B^t\leq \bar b \mathbf I$, then the following inequality holds
\begin{align}
\mathbf 0\leq(A^t+B^t)^{-1}\leq (A^t)^{-1}
\end{align}
\end{lemma}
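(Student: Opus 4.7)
The plan is to prove this inequality via the monotonicity of matrix inversion on the cone of symmetric positive definite matrices. The key ingredient is the classical fact that if $M\succeq N\succ 0$ in the Loewner partial order, then $N^{-1}\succeq M^{-1}$. Once that tool is available, both ends of the chain follow in a few lines.

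First I would establish the right inequality $(A^t+B^t)^{-1}\preceq (A^t)^{-1}$. Since $B^t\succeq \underline b\,\mathbf{I}\succ \mathbf{0}$, the difference $(A^t+B^t)-A^t=B^t$ is positive definite, so $A^t+B^t\succeq A^t\succ\mathbf{0}$. Applying the inversion-monotonicity lemma immediately yields $(A^t+B^t)^{-1}\preceq (A^t)^{-1}$. For the left inequality, I would note that $A^t+B^t$ is the sum of two positive definite matrices and is therefore positive definite; hence its inverse exists and is itself positive definite, which gives $(A^t+B^t)^{-1}\succ \mathbf{0}\succeq \mathbf{0}$.

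If the reader expects the monotonicity tool to be justified in-line, I would include a one-line proof: from $M\succeq N\succ 0$, conjugate by $N^{-1/2}$ to get $N^{-1/2}MN^{-1/2}\succeq \mathbf{I}$, so its inverse $N^{1/2}M^{-1}N^{1/2}\preceq \mathbf{I}$, and conjugating by $N^{-1/2}$ once more gives $M^{-1}\preceq N^{-1}$.

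The only thing to flag is the role of the uniform bounds $\underline a\mathbf{I}\preceq A^t\preceq \bar a\mathbf{I}$ and $\underline b\mathbf{I}\preceq B^t\preceq \bar b\mathbf{I}$. The pointwise Loewner inequality stated in the lemma does not require the upper bounds, only positive definiteness at each $t$. The constants $\underline a,\bar a,\underline b,\bar b$ are there to guarantee that the resulting bounds are uniform in $t$, which is what will actually be needed when the lemma is invoked in the subsequent error-boundedness argument; so I would mention that the inequality propagates uniformly, e.g.\ $\mathbf{0}\preceq (A^t+B^t)^{-1}\preceq (A^t)^{-1}\preceq \underline a^{-1}\mathbf{I}$, even though this is not strictly part of the stated conclusion. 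There is no genuinely hard step here; the proof is a two-line application of a standard operator-monotone property.
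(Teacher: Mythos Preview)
Your argument is correct and complete: the Loewner-order monotonicity of matrix inversion is exactly the right tool, and your justification via conjugation by $N^{-1/2}$ is the standard one-line proof. Your observation about the role of the uniform bounds is also accurate.

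Note, however, that the paper does not actually prove this lemma; it is imported verbatim from reference~\cite{CHEN2021} and stated without proof. So there is no ``paper's own proof'' to compare against---your write-up simply supplies the elementary details that the authors chose to delegate to the citation.
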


\begin{assumption}\label{ass_2}
Assume that the individual estimate resulting from the local UKF is bounded, i.e., $(\underline \gamma \mathbf I\leq\bar{\mathbf P}_i^t  \leq\bar \gamma\mathbf{I})$ for all \(i = 1, 2, \dots, N\), where $\underline\gamma, \bar\gamma$ are positive real numbers.    
\end{assumption}

\begin{assumption}\label{ass_3}
There exist positive real numbers $\underline r, \bar r$ such that $\underline r \mathbf I\leq\mathbf R_i^t\leq \bar r\mathbf I$.
\end{assumption}

\begin{remark}
Assumption \ref{ass_2} and \ref{ass_3} are commonly adopted in the literature \cite{CHEN2021, Hu2012} for analyzing the performance of the DSE system.
\end{remark}

\begin{lemma}\label{lem_3}\cite{HR2012}
Let $\bar{P}$ be the weighted sum of $N$ positive definite matrices $ P_i \in \mathbb{R}^{n \times n} $, for all $i = 1, 2, \cdots, N $, i.e.,$ \bar{P} = \sum_{i=1}^N \omega_i P_i $, with $ \omega_i \in [0, 1] $ and $ \sum_{i=1}^N \omega_i = 1 $, and matrix $P_i$ is bounded by $ \underline \alpha \mathbf I\leq P_i \leq \bar\alpha \mathbf{I} $, where $\underline\alpha, \bar\alpha$ are positive real numbers. Then matrix $\bar{P}$ is bounded by $\underline\alpha \mathbf I\leq\bar P\leq \bar\alpha \mathbf I$, and the following inequality holds
\begin{align}
(\bar{P})^{-1} = (\sum_{i=1}^N \omega_i P_i)^{-1}\leq \sum_{i=1}^N \omega_i (P_i)^{-1}
\end{align}
\end{lemma}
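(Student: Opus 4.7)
The plan is to prove the two claims of the lemma separately: first the spectral bounds on $\bar P$, and then the operator inequality $\bar P^{-1} \leq \sum_{i=1}^N \omega_i P_i^{-1}$.

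For the bounds, the argument is direct. Multiplication by a nonnegative scalar preserves the Löwner order, so the hypothesis $\underline\alpha \mathbf I \leq P_i \leq \bar\alpha \mathbf I$ combined with $\omega_i \geq 0$ yields $\omega_i \underline\alpha \mathbf I \leq \omega_i P_i \leq \omega_i \bar\alpha \mathbf I$. Summing over $i$ and using $\sum_i \omega_i = 1$ produces $\underline\alpha \mathbf I \leq \bar P \leq \bar\alpha \mathbf I$, and in particular $\bar P$ is positive definite and thus invertible.

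For the inverse inequality, the approach I would take is to exploit the operator convexity of $X \mapsto X^{-1}$ on the positive definite cone, established via a Schur complement argument. Observe that for each $i$ the identity $P_i^{-1} - \mathbf I \cdot (P_i)^{-1} \cdot \mathbf I = 0$ means the block matrix
\begin{equation*}
\begin{bmatrix} P_i & \mathbf I \\ \mathbf I & P_i^{-1} \end{bmatrix}
\end{equation*}
is positive semidefinite (its Schur complement vanishes). Since the PSD cone is convex, forming the weighted sum with the $\omega_i$ preserves positive semidefiniteness, giving
\begin{equation*}
\begin{bmatrix} \bar P & \mathbf I \\ \mathbf I & \sum_{i=1}^N \omega_i P_i^{-1} \end{bmatrix} \geq 0.
\end{equation*}
Because $\bar P > 0$ from the first step, the Schur complement condition applied to this block matrix is equivalent to $\sum_{i=1}^N \omega_i P_i^{-1} - \bar P^{-1} \geq 0$, which is exactly the desired inequality.

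There is no deep obstacle here — the statement is a standard operator convexity fact — but the one subtlety worth being careful about is ensuring $\bar P$ is invertible before invoking Schur complements, which is why proving the lower bound $\underline\alpha \mathbf I \leq \bar P$ first is essential to the logical flow. Once that is in place, the rest is purely algebraic manipulation of $2n \times 2n$ block matrices.
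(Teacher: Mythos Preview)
Your proof is correct. The paper does not actually prove this lemma; it simply cites it from \cite{HR2012} and uses it as a black box in the proof of Theorem~\ref{Th1}. Your argument---first establishing $\underline\alpha\mathbf I \le \bar P \le \bar\alpha\mathbf I$ by linearity of the L\"owner order under nonnegative combinations, then invoking operator convexity of $X\mapsto X^{-1}$ via the Schur-complement trick on the block matrix $\begin{bmatrix} P_i & \mathbf I \\ \mathbf I & P_i^{-1}\end{bmatrix}$---is a standard and complete proof of this classical fact. The only minor remark is that the lower spectral bound $\underline\alpha\mathbf I \le \bar P$ is not strictly needed to guarantee invertibility of $\bar P$: any convex combination of positive definite matrices is automatically positive definite (take $x^\top \bar P x = \sum_i \omega_i x^\top P_i x > 0$ whenever some $\omega_i>0$), so the Schur-complement step goes through even without the uniform bounds. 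But this is a matter of presentation, not correctness.
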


\begin{theorem}\label{Th1}
Consider the nonlinear system \eqref{eq_sys} under Assumption \ref{ass_2} and \ref{ass_3}.
The estimated covariance $\hat{\mathbf P}_i^t$ for each agent, obtained through Algorithm \ref{alg1}, is uniformly bounded.
\end{theorem}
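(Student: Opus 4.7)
The plan is to propagate bounds through the two fusion steps of Algorithm \ref{alg1} (incremental update, then diffusion update with ICI), using Assumptions \ref{ass_2}--\ref{ass_3} together with Lemmas \ref{lem_2} and \ref{lem_3}. Since only $\hat{\mathbf P}_i^t$ itself needs to be shown uniformly bounded above and below, I would establish a chain of positive-definite inequalities $\underline\gamma\mathbf I\preceq\bar{\mathbf P}_i^t\preceq\bar\gamma\mathbf I \Rightarrow c_1\mathbf I\preceq\check{\mathbf P}_i^t\preceq c_2\mathbf I \Rightarrow c_3\mathbf I\preceq\hat{\mathbf P}_i^t\preceq c_4\mathbf I$, with constants independent of $t$.

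First I would handle the incremental update. From
\[(\check{\mathbf P}_i^t)^{-1}=(\bar{\mathbf P}_i^{t})^{-1}+\sum_{j\in\mathcal{N}_i^t,j\neq i}(\mathbf H_{j}^t)^\top(\mathbf R_j^t)^{-1}\mathbf H_{j}^t,\]
the second sum is positive semidefinite, so Lemma \ref{lem_2} directly yields $\check{\mathbf P}_i^t\preceq\bar{\mathbf P}_i^t\preceq\bar\gamma\mathbf I$, giving the upper bound. For the lower bound, I would use Assumption \ref{ass_2} to bound $(\bar{\mathbf P}_i^t)^{-1}\preceq\underline\gamma^{-1}\mathbf I$ and Assumption \ref{ass_3} to bound $(\mathbf R_j^t)^{-1}\preceq\underline r^{-1}\mathbf I$; together with a uniform bound $\|\mathbf H_j^t\|\le\bar h$ (which follows from Assumption \ref{ass_2} because $\mathbf H_j^t=(\mathbf P_{xz,j}^t)^\top(\bar{\mathbf P}_j^{t|t-1})^{-1}$ and $\mathbf P_{xz,j}^t$ is a bounded function of the sigma points), one obtains $(\check{\mathbf P}_i^t)^{-1}\preceq(\underline\gamma^{-1}+(N-1)\bar h^{2}\underline r^{-1})\mathbf I$, hence a positive lower bound $c_1\mathbf I\preceq\check{\mathbf P}_i^t$.

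Next I would tackle the ICI-based diffusion update
\[(\hat{\mathbf P}_i^t)^{-1}=\sum_{j\in\mathcal{N}_i^t}(\check{\mathbf P}_j^t)^{-1}-(|\mathcal{N}_i^t|-1)(\mathbf P_{\Gamma_i}^t)^{-1}.\]
For the upper bound on $\hat{\mathbf P}_i^t$, the cleanest route is to invoke the standard ICI consistency property: applying Lemma \ref{lem_3} to $\mathbf P_{\Gamma_i}^t=\sum_j\omega_{ij}^t\check{\mathbf P}_j^t$ gives $(\mathbf P_{\Gamma_i}^t)^{-1}\preceq\sum_j\omega_{ij}^t(\check{\mathbf P}_j^t)^{-1}$, from which a short algebraic manipulation shows $(\hat{\mathbf P}_i^t)^{-1}\succeq(\check{\mathbf P}_j^t)^{-1}$ for each $j$, i.e. $\hat{\mathbf P}_i^t\preceq\check{\mathbf P}_j^t\preceq\bar\gamma\mathbf I$. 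For the lower bound, again by Lemma \ref{lem_3}, $\mathbf P_{\Gamma_i}^t\preceq\bar\gamma\mathbf I$, so $(\mathbf P_{\Gamma_i}^t)^{-1}\succeq\bar\gamma^{-1}\mathbf I$ and $\sum_j(\check{\mathbf P}_j^t)^{-1}\preceq |\mathcal{N}_i^t|c_1^{-1}\mathbf I$, yielding $(\hat{\mathbf P}_i^t)^{-1}\preceq |\mathcal{N}_i^t|c_1^{-1}\mathbf I$ and hence $\hat{\mathbf P}_i^t\succeq(|\mathcal N_i^t|/c_1)^{-1}\mathbf I\succeq c_3\mathbf I$ with $c_3$ depending only on $N$ and $c_1$.

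The step I expect to be the main obstacle is the upper bound on $\hat{\mathbf P}_i^t$, i.e. showing that the ICI expression is in fact positive definite and dominated by each $\check{\mathbf P}_j^t$. Naively expanding the expression by Lemma \ref{lem_3} gives terms with signs $\bigl(1-(|\mathcal{N}_i^t|-1)\omega_{ij}^t\bigr)(\check{\mathbf P}_j^t)^{-1}$, which need not be individually nonnegative, so the bound has to be derived more carefully through the known multi-source ICI consistency argument rather than term-by-term. Once that step is secured, uniformity across $t$ follows because every constant derived above depends only on $\underline\gamma,\bar\gamma,\underline r,\bar r,\bar h$ and $N$, not on the specific time-varying topology $\mathbb{G}^t$.
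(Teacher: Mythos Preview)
Your overall route matches the paper's: Lemma~\ref{lem_2} gives $\check{\mathbf P}_i^t\preceq\bar{\mathbf P}_i^t\preceq\bar\gamma\mathbf I$ after the incremental step, and Lemma~\ref{lem_3} applied to $\mathbf P_{\Gamma_i}^t$ is then used on the ICI expression to conclude $\hat{\mathbf P}_i^t\preceq\bar\gamma\mathbf I$. Two departures are worth flagging. First, the paper proves \emph{only} the upper bound on $\hat{\mathbf P}_i^t$ (positive-definiteness is simply cited from \cite{AJ2020}); your additional lower-bound work, including the bound $\|\mathbf H_j^t\|\le\bar h$, is extra and not part of the paper's argument. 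Second, the paper does not pass through your per-neighbor inequality $(\hat{\mathbf P}_i^t)^{-1}\succeq(\check{\mathbf P}_j^t)^{-1}$; instead it chains Lemma~\ref{lem_3} to the weaker convex-combination bound
\[
(\hat{\mathbf P}_i^t)^{-1}\;\succeq\;\sum_{j\in\mathcal N_i^t}(\check{\mathbf P}_j^t)^{-1}-(|\mathcal N_i^t|-1)\sum_{j\in\mathcal N_i^t}\omega_{ij}^t(\check{\mathbf P}_j^t)^{-1}\;\succeq\;\sum_{j\in\mathcal N_i^t}\omega_{ij}^t(\check{\mathbf P}_j^t)^{-1}\;\succeq\;\tfrac{1}{\bar\gamma}\mathbf I.
\]

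Your per-neighbor claim is in fact \emph{not} a general ICI property and constitutes a gap in the proposal as written. Already in the two-source case $\hat P^{-1}=P_1^{-1}+P_2^{-1}-(\omega P_1+(1-\omega)P_2)^{-1}$, requiring $\hat P^{-1}\succeq P_1^{-1}$ forces $\omega P_1+(1-\omega)P_2\succeq P_2$, i.e.\ $P_1\succeq P_2$, which need not hold. So the ``short algebraic manipulation'' you allude to does not exist, and the ``multi-source ICI consistency argument'' guarantees consistency with respect to the true covariance, not domination of each input. The paper sidesteps this by aiming only for the scalar lower bound $\tfrac{1}{\bar\gamma}\mathbf I$ on $(\hat{\mathbf P}_i^t)^{-1}$ via the weighted-average route rather than any per-$j$ comparison.
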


\begin{proof}
Consider the incremental update for each agent under Assumption \ref{ass_2} and \ref{ass_3}. According to Lemma \ref{lem_2}, we have
\begin{align}
(\check{\mathbf P}_i^t)&=\left((\bar{\mathbf P}_i^{t})^{-1}+\sum_{j\in\mathcal{N}_i^t,j\neq i}(\mathbf H_{j}^t)^\top (\mathbf R_j^t)^{-1}\mathbf H_{j}^t\right)^{-1}\nonumber\\
&\leq\bar{\mathbf P}_i^{t}\leq \bar\gamma \mathbf I
\end{align}
This shows the covariance $\check{\mathbf P}_i^t$ generated from the incremental update, is bounded for all agents. Building on this result, we further derive that the fused covariance after the diffusion update, is also bounded. In particular, from \eqref{eq_fus_cov}, the following inequality holds according to Lemma \ref{lem_3}
\begin{align}\label{eq_cov_bd}
(\hat {\mathbf P}_i^t)^{-1}&=\sum_{j\in\mathcal{N}_i^t}(\check{\mathbf P}_j^t)^{-1}-(\left|\mathcal{N}_i^t\right|-1)(\mathbf P_{\Gamma_i}^t)^{-1}\nonumber\\
&\geq\sum_{j\in\mathcal{N}_i^t}(\check{\mathbf P}_j^t)^{-1}-(\left|\mathcal{N}_i^t\right|-1)\sum_{j\in\mathcal{N}_i^t}\omega_{ij}^t (\check{\mathbf P}_j^t)^{-1}\nonumber\\
&\geq \sum_{j\in\mathcal{N}_i^t}\omega_{ij}^t (\check{\mathbf P}_j^t)^{-1} \geq \frac{1}{\bar\gamma} \mathbf I
\end{align}
since matrix $\hat{\mathbf P}_i^t$ is the fused covariance matrix which is positive-define \cite{AJ2020}, thus \eqref{eq_cov_bd} implies that
\begin{align}
\hat {\mathbf P}_i^t\leq \bar\gamma \mathbf I
\end{align}
which completes the proof of Theorem \ref{Th1}.

\end{proof}

\section{Numerical Studies}
In this section, we will demonstrate the performance of the proposed DIUKF-ICI algorithm for target tracking using extensive Monte-Carlo simulations. The scenario involves tracking the 3D motion of a quadrotor with eight Ultra Wide Band (UWB) sensors. Each sensor has a limited sensing range of 5 meters, which means the sensors cannot detect the quadrotor if it moves outside this range. Additionally, we introduce the concept of \textit{communication rate}, expressed as a percentage, to quantify the connectivity of the sensor network. For instance, a $50\%$ communication rate implies that the probability of successfully transmitting an information packet between two nodes is $50\%$. 

\textbf{Simulation models and parameters:}
The target state in the simulation is represented by a $SE(3)$ group as follows
\begin{align}
    \mathbf{X} = \begin{bmatrix}
    {{^{G}_L\mathbf{R}}} & {^G{\mathbf v}} & {^G{\mathbf p}}  \\
    \mathbf{0}_{1 \times 2} & 1 & 0  \\
    \mathbf{0}_{1 \times 2} & 0 & 1 \\
    \end{bmatrix} \in \mathbb{R}^{5 \times 5}
\end{align}
where ${^{G}_L\mathbf{R}} \in SO(3)$ denotes the target's orientation from the target local frame $L$ to the global frame $G$. ${^G{\mathbf v}}, {^G{\mathbf p}} \in \mathbb{R}^3$ represents the target's velocity and position. 

The proposed algorithm requires the target’s dynamics model for state propagation. One approach to predict the target's state—encompassing position, orientation, and velocity—is through the use of inertial measurement unit (IMU) data for dynamics modeling. In this paper, we directly utilize IMU propagation model as the motion model to generate a predicted state estimate \cite{BM2018}. The frequency of the IMU and UWB measurements are set to be $100 Hz$ and $10 Hz$, respectively. The
IMU's linear acceleration and angular velocity are assumed to progress as random walk driven by Gaussian, which are selected to be $[0.02, 0.02, 0.05]^\top {m/s^2/\sqrt{Hz}}$, and $[0.005, 0.005, 0.005]^\top {deg/s/\sqrt{Hz}}$.
For the measurement model, we apply the UWB ranging measurement model to the sensor networks as
\begin{align}
z_i^t = \lVert{^G\mathbf{p}^t - ^G\mathbf{p}_a}\rVert + \mathbf n_u^t
\end{align}
where ${^G\mathbf{p}_a}$ denotes the UWB anchor positions, and $\mathbf n_u^t$ is assumed to be Gaussian with $\sigma=0.10 m$. The anchor positions in our simulation are shown in Table \ref{tab:sensor_pos}.

\begin{table}[h]
    \centering
    \caption{Coordinates of the 8 Sensors}
    \vspace{-0.5ex}
    \begin{tabular}{cccc}
        \toprule
        Sensor & Coordinates (m) & Sensor & Coordinates (m) \\
        \midrule
        1 & [-5 -5 0] & 5 & [-5 -5 5] \\
        2 & [15 -5 0] & 6 & [15 -5 5] \\
        3 & [-5 15 0] & 7 & [-5 15 5] \\
        4 & [15 15 0] & 8 & [15 15 5] \\
        \bottomrule
    \end{tabular}
    \label{tab:sensor_pos}
\end{table}

\textbf{Simulation results:}
As shown in Figure \ref{fig_traj}, the drone (target) follows two pre-designed trajectories and we can see if the proposed algorithm can accurately track them. The initial pose of each trajectory is shown in Table \ref{tab:pos_init}.
\begin{table}[h]
    \centering
    \caption{Initialization of Trial}
    \begin{tabular}{ccc}
        \toprule
        Traj. & Position (m) & Orientation (rad) \\
        \midrule
        1 & $\begin{bmatrix} -2.0000 & 2.0000 & 0.8000 \end{bmatrix}^\top$ & $\begin{bmatrix} 0 & -\pi & 0 \end{bmatrix}^\top$ \\
        2 & $\begin{bmatrix} -2.0000 & 0.0000 & 0.5000 \end{bmatrix}^\top$ & $\begin{bmatrix} 0 & -\pi & 0 \end{bmatrix}^\top$ \\
        \bottomrule
    \end{tabular}
    \label{tab:pos_init}
\end{table}

We conduct 50 trails of Monte-Carlo simulations, and plot the estimation results by node 1 as a representative result, shown in Figure \ref{fig_traj}.  It can be seen that the proposed DIUKF-ICI can achieve satisfying tracking performance. The estimation results are also quantified by rooted mean square error (RMSE) to evaluate the accuracy. Figure \ref{fig_rmse} shows the averaged position RMSE (PRMSE) and orientation RMSE (ORMSE) across different percentages of the communication as well as the centralized scenarios. It can be observed that the proposed algorithm is robust to time-varying communication topology, demonstrating superior performance in the results of both position and orientation estimations.

% \begin{table*}[t]
%     \centering
%     \caption{Performance analysis under various communication rates}
%     \label{tab:error_analysis}
%     \begin{tabular}{|c|c|c|c|c|c|c|c|c|}
%         \toprule
%         Comm. rate & \multicolumn{2}{c}{10\%} & \multicolumn{2}{c}{40\%} & \multicolumn{2}{c}{70\%} & \multicolumn{2}{c}{100\%} \\
%         \cmidrule(r){2-3} \cmidrule(lr){4-5} \cmidrule(lr){6-7} \cmidrule(l){8-9}
%         Exp. & PRMSE & ORMSE & PRMSE & ORMSE & PRMSE & ORMSE & PRMSE & ORMSE \\
%         \midrule
%         DIUKF-ICI 1 & 0.064117 & 1.873195 & 0.049641 & 3.200187 & 0.047770 & 4.52283 & 0.046699 & 4.452502 \\
%         DIUKF-ICI 2 & 0.066944 & 1.689043 & 0.051943 & 2.123628 & 0.050135 & 2.649955 & 0.049184 & 2.720087 \\
%         DIEKF-ICI 1 & 0.069696 & 0.713764 & 0.043418 & 0.872968 & 0.037790 & 0.806647 & 0.035636 & 0.808377 \\
%         DIEKF-ICI 2 & 0.066553 & 0.691627 & 0.041734 & 0.784725 & 0.037894 & 0.872378 & 0.035546 & 0.922528 \\
%         DIEKF-CI 1 & 0.073752 & 0.734860 & 0.042551 & 0.908261 & 0.038185 & 0.952985 &  0.035819 & 0.872864 \\
%         DIEKF-CI 2 & 0.073595 & 0.897031 & 0.043602 & 0.971661 & 0.037989 & 0.975906 & 0.036191 & 0.937890 \\
%         \bottomrule
%     \end{tabular}
% \end{table*}

\renewcommand{\arraystretch}{1.5}
\begin{table*}[h!]
    \centering
    \caption{Performance analysis under various communication rates}
    \begin{tabular}{cccccccccc}
        \toprule
        \multicolumn{2}{c}{\textbf{Communication Rate}} & \multicolumn{2}{c}{10\%} & \multicolumn{2}{c}{40\%} & \multicolumn{2}{c}{70\%} & \multicolumn{2}{c}{100\%} \\
        \cmidrule(lr){1-2} \cmidrule(lr){3-4} \cmidrule(lr){5-6} \cmidrule(lr){7-8} \cmidrule(lr){9-10}
        Algorithm & Trajectory & PRMSE & ORMSE & PRMSE & ORMSE & PRMSE & ORMSE & PRMSE & ORMSE \\
        \midrule
        DIUKF-ICI & 1 & 0.0647 & 1.8731 & 0.0496 & 0.8201 & 0.0477 & 0.5528 & 0.0499 & 0.5222 \\
        DIUKF-ICI & 2 & 0.0669 & 1.6890 & 0.0519 & 0.9236 & 0.0501 & 0.6495 & 0.0494 & 0.6172 \\
        DIEKF-ICI & 1 & 0.0696 & 1.9134 & 0.0538 & 0.8729 & 0.0479 & 0.8066 & 0.0526 & 0.7087 \\
        DIEKF-ICI & 2 & 0.0693 & 1.9916 & 0.0517 & 0.9847 & 0.0574 & 0.8723 & 0.0555 & 0.9225 \\
        DIEKF-CI & 1 & 0.0737 & 1.9949 & 0.0526 & 0.9082 & 0.0581 & 0.9529 & 0.0558 & 0.8729 \\
        DIEKF-CI & 2 & 0.0735 & 1.8970 & 0.0536 & 0.9716 & 0.0579 & 0.9759 & 0.0562 & 0.9379 \\
        \bottomrule
    \end{tabular}\label{tab:error_analysis}
\end{table*}

\begin{figure}[h]
    \centering
    \begin{subfigure}[b]{0.23\textwidth}
        \centering
        \includegraphics[width=\textwidth]{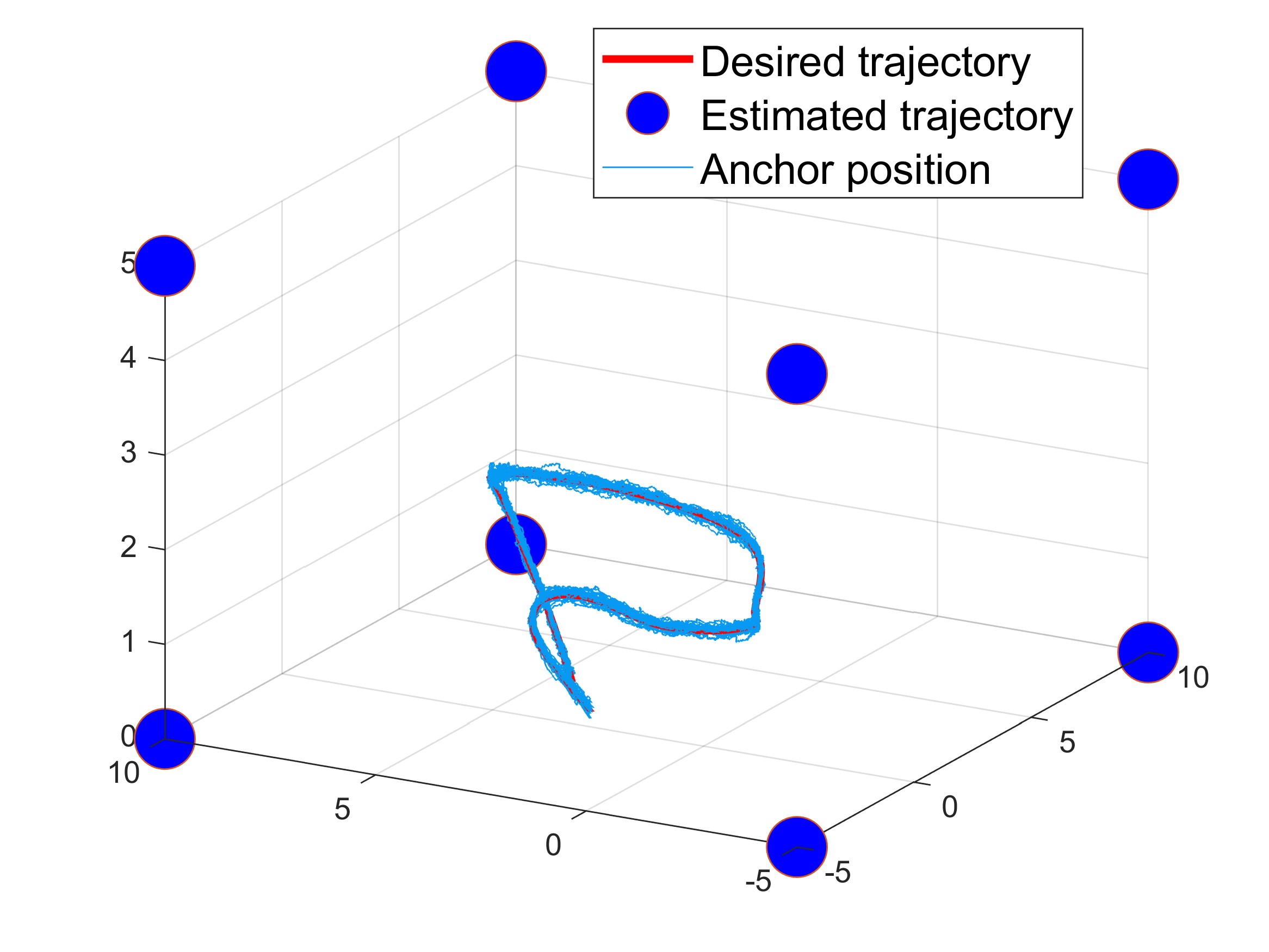}  % First image
        \caption{Trajectory 1}
        \label{fig:image1}
    \end{subfigure}
    \hfill
    \begin{subfigure}[b]{0.23\textwidth}
        \centering
        \includegraphics[width=\textwidth]{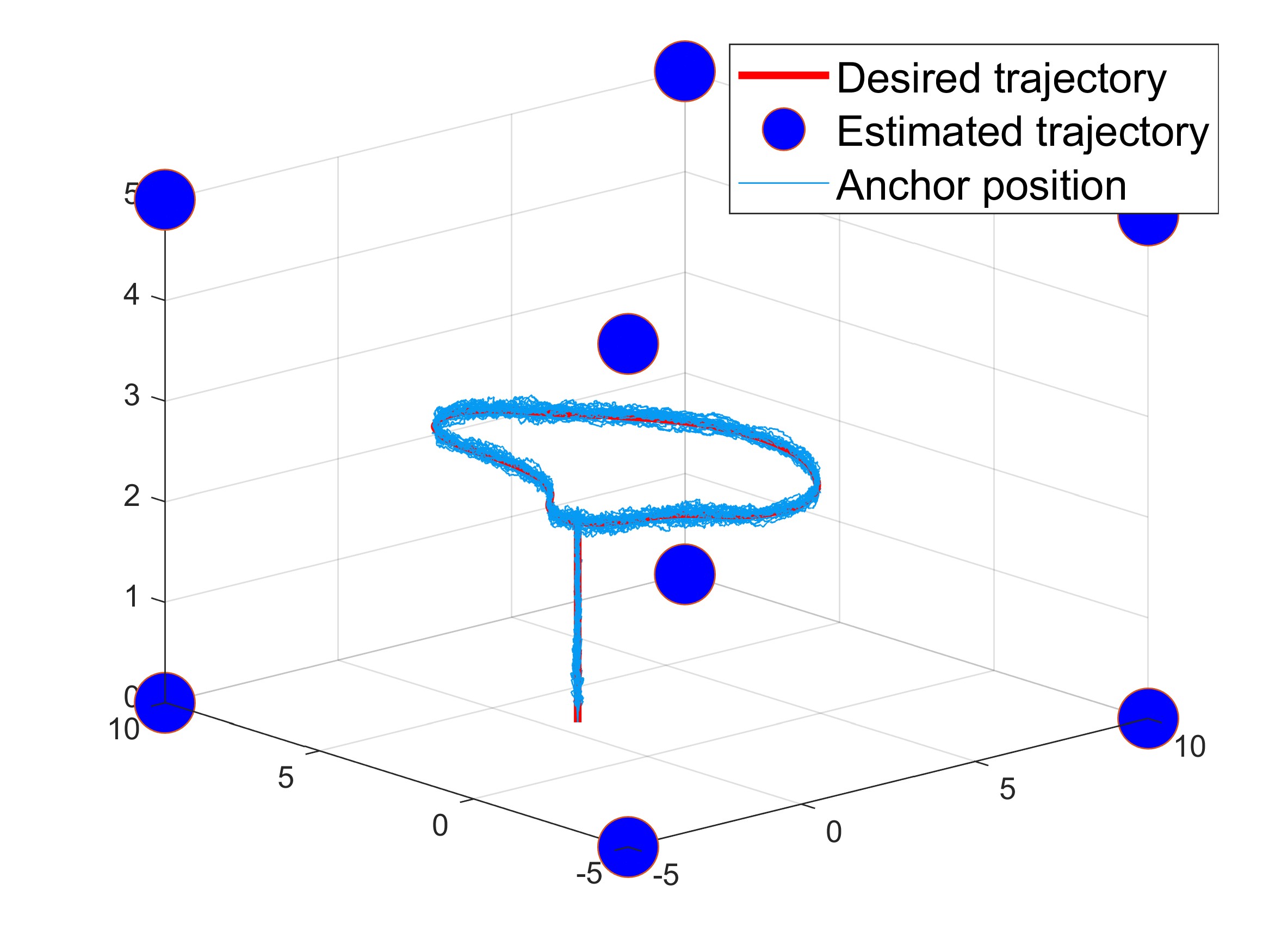}  % Second image
        \caption{Trajectory 2}
        \label{fig:image2}
    \end{subfigure}
    \caption{Estimation results of DIUKF-ICI on different trajectories at 70\% communication rate}
    \label{fig_traj}
\end{figure}

% \begin{figure}[t]
%     \centering
%     \begin{subfigure}[b]{0.22\textwidth}
%         \centering
%         \includegraphics[width=\textwidth]{}
%         \caption{Trajectory 1}
%         \label{fig:subfig1}
%     \end{subfigure}
%     \hfill
%     \begin{subfigure}[b]{0.22\textwidth}
%         \centering
%         \includegraphics[width=\textwidth]{}
%         \caption{Trajectory 2}
%         \label{fig:subfig2}
%     \end{subfigure}
%     \caption{UKF-ICI and EKF-ICI at 70\% communication rate}
%     \label{fig_UKF_EKF}
% \end{figure}

\begin{figure}[h]
    \centering
    \begin{subfigure}[b]{0.5\textwidth}
        \centering
        \includegraphics[width=\textwidth]{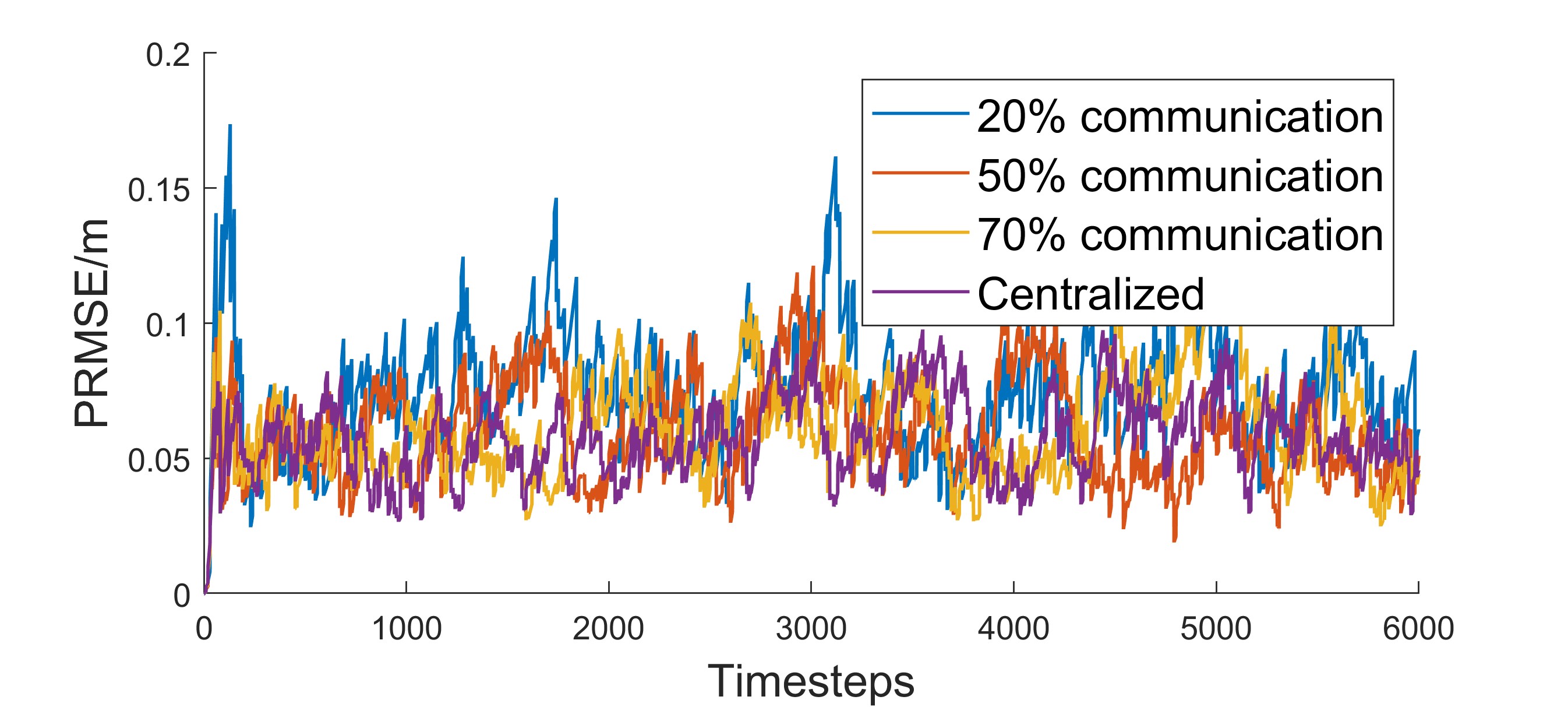}
        \caption{PRMSE}
        \label{fig:subfig1}
    \end{subfigure}
    \hfill
    \begin{subfigure}[b]{0.5\textwidth}
        \centering
        \includegraphics[width=\textwidth]{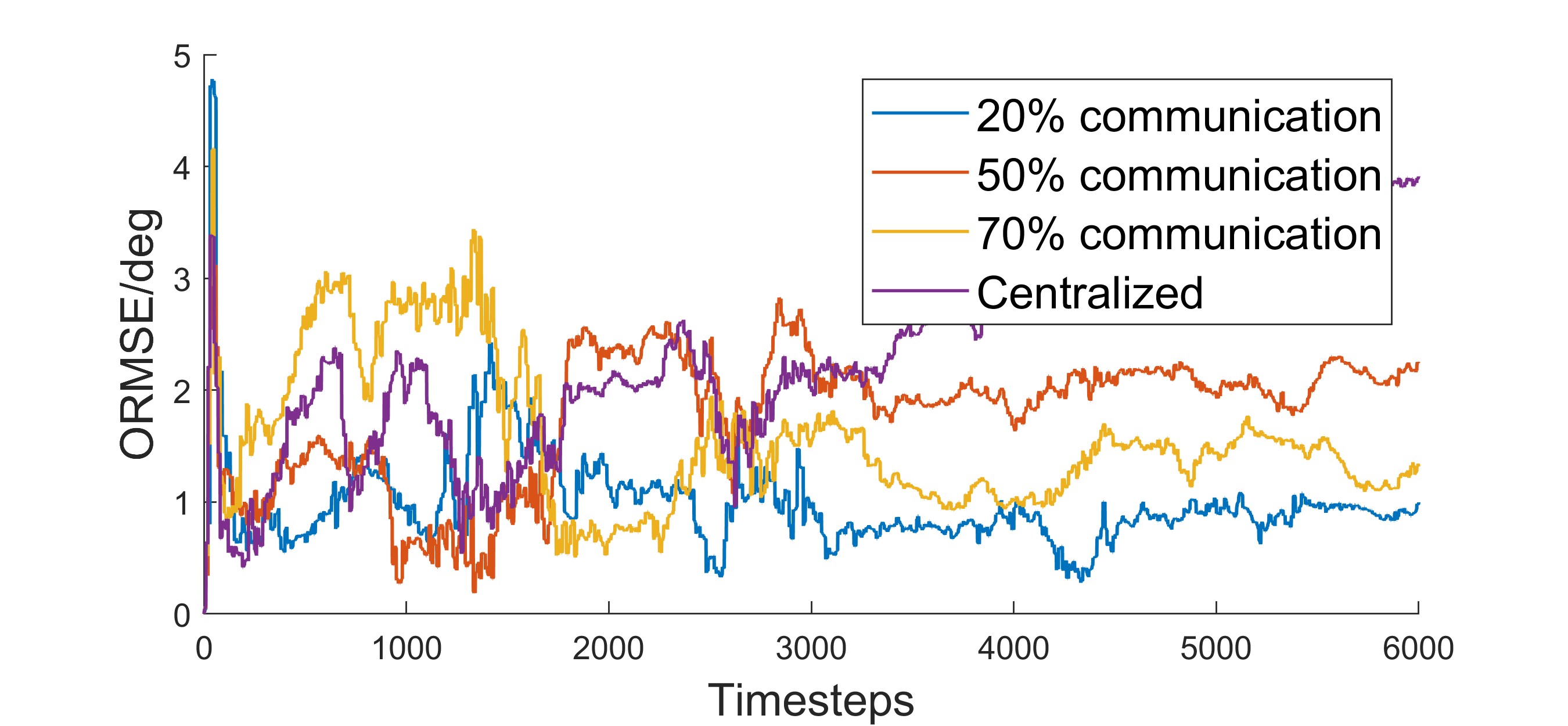}
        \caption{ORMSE}
        \label{fig:subfig2}
    \end{subfigure}
    \caption{PRMSE and ORMSE of trajectory 1 under different communications}
    \label{fig_rmse}
\end{figure}
To further demonstrate the effectiveness of the proposed algorithm and the broad applicability of the proposed extended ICI fusion rule on Lie groups, we also present the results of the distributed IEKF based on CI (DIEKF-CI) as a comparison, and incorporate the extended ICI fusion rules into the distributed EKF-based frameworks. From Table \ref{tab:error_analysis}, it is clear that the DIUKF-ICI outperforms the DIEKF-CI in estimating both position and orientation, as the UKF avoids the need for linearization, making it more suitable for handling nonlinearities inherent in the system. 
In addition, the proposed extended ICI fusion rule also shows strong performance in the distributed IEKF-based frameworks, which demonstrates the generalization of this method, as it achieves favorable results in both UKF-based and EKF-based frameworks.

% To further evaluate the performance of UKF-ICI, a detailed comparative analysis is done with the EKF-ICI, summarized in Table 3. This table presents the averaged PRMSE and ORMSE observed over 20 runs for both algorithms. Fig. 3 \& 4 illustrate the trajectories generated by each algorithm compared to the ground truth for Trajectory 1, using a communication rate of 70\%. Fig. 5 offers an extensive summary of PRMSE and ORMSE for both algorithms at varying communication rates for node 1 during a single run on Trajectory 1. This comparison sheds light on the relative performance of UKF-ICI and EKF-ICI under diverse network conditions, providing critical insights into their operational effectiveness.

% \begin{figure}[h]
%     \includegraphics[scale = .15]{}  % This image is centered within the right half of the page
%     \centering
%     \caption{UKF-ICI v.s. EKF-ICI: PRMSE and ORMSE at node 1}
%      \label{fig:image2}
% \end{figure}

\section{Conclusion}
This paper presents a novel distributed invariant Unscented Kalman Filter (DIUKF) using inverse covariance intersection (ICI) for sensor networks. The proposed algorithm extends the distributed UKF framework to Lie groups, enabling local estimates
to be fused with intermediate information from neighboring agents on Lie groups. In addition, the proposed algorithm is fully distributed which requires only the local information.
The performance and effectiveness of the proposed algorithm is validated through simulations in distributed 3-D target tracking scenarios. In the future works, we plan to implement the proposed algorithms in real-world robot platforms to test its performance.

\section{acknowledgement}
The authors would like to thank Dr. Xuan Wang for his useful suggestions and meaningful discussions.

\bibliographystyle{IEEEtran}
\bibliography{main}

% Generated by IEEEtran.bst, version: 1.14 (2015/08/26)
\begin{thebibliography}{10}
\providecommand{\url}[1]{#1}
\csname url@samestyle\endcsname
\providecommand{\newblock}{\relax}
\providecommand{\bibinfo}[2]{#2}
\providecommand{\BIBentrySTDinterwordspacing}{\spaceskip=0pt\relax}
\providecommand{\BIBentryALTinterwordstretchfactor}{4}
\providecommand{\BIBentryALTinterwordspacing}{\spaceskip=\fontdimen2\font plus
\BIBentryALTinterwordstretchfactor\fontdimen3\font minus \fontdimen4\font\relax}
\providecommand{\BIBforeignlanguage}[2]{{%
\expandafter\ifx\csname l@#1\endcsname\relax
\typeout{** WARNING: IEEEtran.bst: No hyphenation pattern has been}%
\typeout{** loaded for the language `#1'. Using the pattern for}%
\typeout{** the default language instead.}%
\else
\language=\csname l@#1\endcsname
\fi
#2}}
\providecommand{\BIBdecl}{\relax}
\BIBdecl

\bibitem{KUMAR2021109558}
\BIBentryALTinterwordspacing
M.~Kumar and S.~Mondal, ``Recent developments on target tracking problems: A review,'' \emph{Ocean Engineering}, vol. 236, p. 109558, 2021. [Online]. Available: \url{https://www.sciencedirect.com/science/article/pii/S0029801821009471}
\BIBentrySTDinterwordspacing

\bibitem{NOACK2017}
\BIBentryALTinterwordspacing
B.~Noack, J.~Sijs, M.~Reinhardt, and U.~D. Hanebeck, ``Decentralized data fusion with inverse covariance intersection,'' \emph{Automatica}, vol.~79, pp. 35--41, 2017. [Online]. Available: \url{https://www.sciencedirect.com/science/article/pii/S0005109817300298}
\BIBentrySTDinterwordspacing

\bibitem{Hu2012}
J.~Hu, L.~Xie, and C.~Zhang, ``Diffusion kalman filtering based on covariance intersection,'' \emph{IEEE Transactions on Signal Processing}, vol.~60, no.~2, pp. 891--902, 2012.

\bibitem{Ren2007}
W.~Ren, R.~W. Beard, and E.~M. Atkins, ``Information consensus in multivehicle cooperative control,'' \emph{IEEE Control Systems Magazine}, vol.~27, no.~2, pp. 71--82, 2007.

\bibitem{CHEN2021}
\BIBentryALTinterwordspacing
H.~Chen, J.~Wang, C.~Wang, J.~Shan, and M.~Xin, ``Distributed diffusion unscented kalman filtering based on covariance intersection with intermittent measurements,'' \emph{Automatica}, vol. 132, p. 109769, 2021. [Online]. Available: \url{https://www.sciencedirect.com/science/article/pii/S0005109821002892}
\BIBentrySTDinterwordspacing

\bibitem{Sun2023}
T.~Sun and M.~Xin, ``Inverse-covariance-intersection-based distributed estimation and application in wireless sensor network,'' \emph{IEEE Transactions on Industrial Informatics}, vol.~19, no.~10, pp. 10\,079--10\,090, 2023.

\bibitem{ZYICI}
Y.~Zhou, Y.~Liu, and X.~Wang, ``Distributed estimation for a 3-d moving target in quaternion space with unknown correlation,'' in \emph{2024 IEEE Conference on Control Technology and Applications (CCTA)}, 2024, pp. 394--399.

\bibitem{Jie2023}
J.~Xu, P.~Zhu, Y.~Zhou, and W.~Ren, ``Distributed invariant extended kalman filter using lie groups: Algorithm and experiments,'' \emph{IEEE Transactions on Control Systems Technology}, vol.~31, no.~6, pp. 2777--2789, 2023.

\bibitem{ZH2024}
\BIBentryALTinterwordspacing
C.~Zhang, J.~Qin, C.~Yan, Y.~Shi, Y.~Wang, and M.~Li, ``Towards invariant extended kalman filter-based resilient distributed state estimation for moving robots over mobile sensor networks under deception attacks,'' \emph{Automatica}, vol. 159, p. 111408, 2024. [Online]. Available: \url{https://www.sciencedirect.com/science/article/pii/S0005109823005757}
\BIBentrySTDinterwordspacing

\bibitem{Liang2021}
L.~Li and M.~Yang, ``Joint localization based on split covariance intersection on the lie group,'' \emph{IEEE Transactions on Robotics}, vol.~37, no.~5, pp. 1508--1524, 2021.

\bibitem{LE2023}
\BIBentryALTinterwordspacing
J.-G. Lee, Q.~V. Tran, K.-H. Oh, P.-G. Park, and H.-S. Ahn, ``Distributed object pose estimation over strongly connected networks,'' \emph{Systems \& Control Letters}, vol. 175, p. 105505, 2023. [Online]. Available: \url{https://www.sciencedirect.com/science/article/pii/S016769112300052X}
\BIBentrySTDinterwordspacing

\bibitem{Zarei2024}
M.~Zarei and R.~Chhabra, ``Consistent fusion of correlated pose estimates on matrix lie groups,'' \emph{IEEE Robotics and Automation Letters}, vol.~9, no.~7, pp. 6584--6591, 2024.

\bibitem{BAB2017}
A.~Barrau and S.~Bonnabel, ``The invariant extended kalman filter as a stable observer,'' \emph{IEEE Transactions on Automatic Control}, vol.~62, no.~4, pp. 1797--1812, 2017.

\bibitem{BMB2017}
M.~Brossard, S.~Bonnabel, and J.-P. Condomines, ``Unscented kalman filtering on lie groups,'' in \emph{2017 IEEE/RSJ International Conference on Intelligent Robots and Systems (IROS)}, 2017, pp. 2485--2491.

\bibitem{CI1997}
S.~Julier and J.~Uhlmann, ``A non-divergent estimation algorithm in the presence of unknown correlations,'' in \emph{Proceedings of the 1997 American Control Conference (Cat. No.97CH36041)}, vol.~4, 1997, pp. 2369--2373 vol.4.

\bibitem{SSR2017}
T.~D. Barfoot, \emph{State Estimation for Robotics}.\hskip 1em plus 0.5em minus 0.4em\relax Cambridge University Press, 2017.

\bibitem{AJ2020}
J.~Ajgl and O.~Straka, ``Inverse covariance intersection fusion of multiple estimates,'' in \emph{2020 IEEE 23rd International Conference on Information Fusion (FUSION)}, 2020, pp. 1--8.

\bibitem{HR2012}
R.~A. Horn and C.~R. Johnson, \emph{Matrix Analysis}.\hskip 1em plus 0.5em minus 0.4em\relax Cambridge University Press, 2012.

\bibitem{BM2018}
M.~Brossard, S.~Bonnabel, and A.~Barrau, ``Invariant kalman filtering for visual inertial slam,'' in \emph{2018 21st International Conference on Information Fusion (FUSION)}, 2018, pp. 2021--2028.

\end{thebibliography}

\end{document}